\documentclass[a4paper,final]{amsart}

\usepackage[T1]{fontenc}
\usepackage[utf8]{inputenc}
\usepackage{amsfonts}
\usepackage{amssymb}
\usepackage{amsthm}
\usepackage{mathtools}
\usepackage[scaled]{helvet}
\usepackage{mathpazo}
\usepackage{enumitem}
\usepackage[scr=rsfso,scrscaled=1.1]{mathalpha}

\newcommand{\MM}{{\mathbb M}}
\newcommand{\RR}{{\mathbb R}}
\newcommand{\VV}{{\mathbb V}}

\newcommand{\cC}{{\mathcal C}} 
\newcommand{\cR}{{\mathcal R}} 
\newcommand{\cS}{{\mathcal S}} 
\newcommand{\del}{\partial}
\newcommand{\eps}{\varepsilon}
\newcommand*{\dd}{\mathrm{d}}
\newcommand*{\Lie}{\mathscr{L}}

\usepackage[style=phys,sorting=nyt,
                articletitle=true,
                biblabel=brackets,
                chaptertitle=true,
                pageranges=false,
                hyperref=true,
                maxcitenames=3,
                url=true,
                autocite=superscript] {biblatex}

\usepackage{hyperref}
\hypersetup{linkcolor=black,
            urlcolor=blue,  
            breaklinks=true,
            colorlinks=true,
            citebordercolor=0 0 0,
            filebordercolor=0 0 0,
            linkbordercolor=0 0 0,
            menubordercolor=0 0 0,
            urlbordercolor=0 0 0
} 
\usepackage[color]{showkeys}

\newtheorem{theorem}{Theorem}[section]
\newtheorem{prop}[theorem]{Proposition}

\newtheorem{corr}[theorem]{Corollary}

\numberwithin{equation}{section}

\newcommand*{\hnabla}{\widehat{\nabla}}

\newcommand*{\tg}{\tilde{g}}

\begin{document}
\title{A note on the conformal geometry of the 2-sphere}

\author[J. Frauendiener]{Jörg Frauendiener}
\email{joerg.frauendiener@otago.ac.nz}
\address{Department of Mathematics and Statistics, University of Otago, New Zealand}

\begin{abstract}
  We discuss the geometry of the two-dimensional sphere. Our perspective focuses on the set of conformal factors that rescale a given metric to a round metric. We find that there exists a unique symmetric and trace-free tensor on the sphere which complements the Ricci tensor to a divergence-free symmetric tensor. Using this tensor, we derive a linear second-order PDE whose solution space carries a natural Minkowski structure. We show that there are in fact infinitely many such spaces, all related by isometries which can be thought of as being parametrised by ``4-vectors''. Finally, we show that every 2-sphere can be isometrically embedded in a Minkowski space in a canonical way.
\end{abstract}
\maketitle

\section{Introduction}
\label{sec:introduction}

It is well known that the group of Möbius transformations is isomorphic to the identity component of the Lorentz group. But why is that the case? The Möbius group appears as automorphisms of the complex structure of the Riemann sphere, while the Lorentz group is the invariance group of Minkowski space, the arena of special relativity. Where is the Minkowski space associated with the Möbius transformations?

To be sure, we can set up a relationship between the sphere and Minkowski space by considering the set of null directions attached to the origin in Minkowski space. This defines the celestial sphere (see Penrose and Rindler~\cite{Penrose:1984a}). The induced action of the Lorentz group on the celestial sphere coincides with the Möbius group. However, this construction seems like an afterthought. One should be able to define a Minkowski space in a canonical way from the geometric properties of the 2-sphere alone. But how can this be achieved?

This question was one of the motivations for this note. Other reasons arose from the study of the geometry of null-infinity, where certain conformally invariant constructions became apparent in relation to foliations by spherical ``cuts'' (see~\cite{Frauendiener:2022} and a forthcoming paper). It soon became apparent that these properties were not inherent to the null-structure of null-infinity but were due to the conformal structure of the spherical cuts.

In this note, we will discuss the conformal geometry of the 2-sphere from an alternative perspective. The focus will be on the relationship between the generic sphere and a round sphere, by which we will always mean the 2-sphere equipped with a metric of constant Gauß curvature. This relationship is mediated by functions on the sphere which rescale the given metric to a round metric. This round metric, in a sense, provides a rather restricted background against which the generic metric can be viewed. Each conformal factor provides its own lens for viewing this background, thereby offering a different perspective.

The interplay between these conformal factors yields a uniquely defined symmetric, trace-free tensor that complements the Ricci tensor (which is pure trace) of the given metric to form a divergence-free symmetric tensor. This unique tensor provides a different measure for the deviation of the metric from roundness, which is different from its curvature. It has appeared in the literature before, at least as early as 1977 in the seminal work by Geroch~\cite{Geroch:1977} on the asymptotic structure of space-times, in which he introduced the ``$\rho$-tensor'' as a means to compensate for the complicated transformation behaviour of the Ricci tensor of the metric induced on a cut. The Geroch tensor is fundamental for discussing the structure of null-infinity in an invariant way.

This tensor is indeed an intrinsic property of the sphere, and therefore equally fundamental for capturing some of its essential geometric properties. We will show below that all conformal factors to round metrics satisfy a linear PDE on the sphere, which is constructed from the distortion tensor. The solution space of this equation turns out to be a natural Minkowski space, a 4-dimensional real vector space with a naturally defined inner product with Lorentzian signature.

Nothing in this note is really new. However, we hope the perspective on the geometry of the 2-sphere that we exhibit here is fruitful for further developments in GR. As one example, we propose an application to the discussion of quasi-local quantities.

The paper is structured as follows. We provide the fundamental equations and relationships that we use in Sect.~\ref{sec:basic-equations}. In Sect.~\ref {sec:resc-round-metr} we derive the properties of the conformal factors that we mentioned above, and in Sect.~\ref{sec:solving-phi-equation} we define the intrinsic Minkowski space. Next, in Sect.~\ref{sec:invariance-group-1}, we discuss transformations which leave the construction invariant and, finally, in Sect.~\ref{sec:canonical-embeddings} we construct a canonical embedding of every 2-sphere into an appropriate Minkowski space. Sect.~\ref{sec:conclusion} provides summary and outlook.

A quick note on our notation and conventions. We are concerned here with the 2-sphere. The round unit-sphere in $\RR^3$ is defined by
\[
  S^2:=\{ (X^A)_{A=1:3} = (x,y,z)\in \RR^3 : x^2 + y^2 + z^2 = 1\}.
\]
We assume $S^2$ is equipped with the usual metric inherited from $\RR^3$ \emph{except} that it is \emph{negative} definite
\[
  q = -\delta_{AB}\dd X^A\dd X^B.
\]
The connection compatible with the round metric $q_{ab}$ is denoted by $\del_a$, so that $\del_cq_{ab}=0$. The sphere has constant Gauß curvature $K=1$. We refer to any metric on $S^2$ with constant Gauß curvature as a round metric. A metric with $K=1$ is also referred as a unit-metric. In general, for an arbitrary metric $g_{ab}$ on $S^2$ with covariant derivative $\nabla_a$, its Riemann tensor $r_{abc}{}^d$ can be written in the form
\[
  r_{abc}{}^d = K\eps_{ab}\eps_c{}^d = K\left(g_{ac}\delta_b{}^d - \delta_a{}^dg_{bc}\right) \implies r_{ab} = K g_{ab}, \quad r_a{}^a = 2K.
\]
Here, we have also introduced the area-form $\eps_{ab}$. Otherwise, we use the conventions of~\cite{Penrose:1984a} throughout.

\section{The basic equations}
\label{sec:basic-equations}

In this section, we collect the main relationships that we will need in the discussions below. Let $(\cS,g_{ab})$ be a generic spherical surface with metric $g_{ab}$ and compatible connection $\nabla_a$. Furthermore, let $v_a$ be any 1-form on $S^2$. Then we find 
\[
  \begin{aligned}
    \nabla_a\nabla^ev_e &= \nabla_e\nabla_av^e + r_{aec}{}^ev^c = \nabla^e\nabla_av_e + Kv_a \\
    &= \nabla^e(\nabla_{[a}v_{e]}) + \nabla^e(\nabla_{\{a}v_{e\}}) + \frac12 \nabla_a\nabla^ev_e + Kv_a.
\end{aligned}
\]
Thus, with $v_a=\nabla_af$ and $\Delta=\nabla^e\nabla_e$, we obtain the important identity
\begin{equation}
  \label{eq:1}
  \nabla_a(\Delta f - 2 K f)= 2 \nabla^e(\nabla_{\{a}\nabla_{e\}}f) - 2\nabla_aK f.
\end{equation}
Recall the formula for the Gauß curvatures of two conformally related metrics $g_{ab} = \Theta^2\tg_{ab}$\footnote{Note that this is the formula for a negative definite metric. In the positive definite case, there would be a plus sign.}
\begin{equation}
  \tilde{K} = \Theta^{2}\left(K - \nabla^a\Theta_a\right) , \quad \Theta_a := \nabla_a\log\Theta.\label{eq:2}
\end{equation}
Taking the derivative of the equation, we compute
\[
  \begin{aligned}
    \Theta^{-2}\nabla_a\tilde{K} &= 2\Theta_a K + \nabla_a K - 2\Theta_a \nabla^e\Theta_e - \nabla_a\nabla^e\Theta_e \\
    &= 2\Theta_aK + \nabla_a K - 2\Theta_a \nabla^e\Theta_e - \nabla^e(\nabla_{\{a}\Theta_{e\}}) - \frac12 \nabla_a\nabla^e\Theta_e - K \Theta_a \\
    &= K \Theta_a + \nabla_a K - \Theta_a \nabla^e\Theta_e - \frac12 \nabla_a\nabla^e\Theta_e - \nabla^e(\nabla_{\{a}\Theta_{e\}})  - \Theta_a \nabla^e\Theta_e \\
    &= \frac12 \Theta^{-2}\nabla_a\tilde{K} + \frac12 \nabla_a K - \nabla^e(\nabla_{\{a}\Theta_{e\}})  - \nabla^e(\Theta_{\{a} \Theta_{e\}}).
  \end{aligned}
\]
Thus, we found the important relationship between the gradients of the Gauß curvatures
\begin{equation}
  \label{eq:3}
  \Theta^{-2}\nabla_a\tilde{K} - \nabla_a K = -2\nabla^e\left(\nabla_{\{a}\Theta_{e\}} + \Theta_{\{a} \Theta_{e\}}\right) = -2 \nabla^e\left(\Theta^{-1}\nabla_{\{a}\nabla_{e\}}\Theta\right).
\end{equation}
For later use, let us define the symmetric trace-free tensor defined in terms of $\Theta$ by
\[
  \Theta_{ab}:=\Theta^{-1}\nabla_{\{a}\nabla_{e\}}\Theta.
\]
Replacing $\tilde{K}$ on the left-hand side from~\eqref{eq:2}, we obtain
\begin{equation}
  \label{eq:4}
  \nabla_a\left[K\Theta^2 - \Theta \nabla^e\nabla_e \Theta + \nabla^e\Theta \nabla_e \Theta\right] = \Theta^2 \left(\nabla_aK - 2 \nabla^e\Theta_{ae}\right).
\end{equation}

\section{Rescaling to a round metric}
\label{sec:resc-round-metr}

Suppose the metric $\tg_{ab}$ is a round metric with constant Gauß curvature $\tilde{K}=k$. Then the formula~\eqref{eq:2} for the Gauß curvatures yields an equation for the conformal factor $\Theta$ between the two metrics
\begin{equation}
  \label{eq:5}
   k = K\Theta^{2} - \Theta \nabla^a\nabla_a \Theta + \nabla^a\Theta \nabla_a \Theta.
\end{equation}
Furthermore, eq.~\eqref{eq:4} now implies 
\begin{prop}\label{prop:1}
  Suppose $\Theta:\cS \to \RR$ is non-zero everywhere, then the following statements are equivalent:
  \begin{enumerate}[label=(\roman*)]
  \item $\Theta^{-2}g_{ab}$ is a metric with constant Gauß curvature,  
  \item $\Theta$ satisfies the equation
  \begin{equation}
    \label{eq:6}
    \nabla_aK = 2 \nabla^e\Theta_{ae}.
  \end{equation}
\item The expression
  \begin{equation}
    \Sigma(\Theta,\Theta):= K\Theta^{2} - \Theta \nabla^a\nabla_a \Theta + \nabla^a\Theta \nabla_a \Theta\label{eq:7}
  \end{equation}
  is constant on $\cS$.
\end{enumerate}
\end{prop}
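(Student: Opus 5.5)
The plan is to read off all three equivalences from the curvature formula \eqref{eq:2} and its differentiated form \eqref{eq:4}, using only that $\cS$ is connected and $\Theta$ is nowhere zero.

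First, (i)$\iff$(iii). With $\tg_{ab}=\Theta^{-2}g_{ab}$, formula \eqref{eq:2} gives $\tilde K=\Theta^2(K-\nabla^a\Theta_a)$. Since $\Theta_a=\Theta^{-1}\nabla_a\Theta$, we have
\[
\nabla^a\Theta_a=\Theta^{-1}\Delta\Theta-\Theta^{-2}\nabla^a\Theta\nabla_a\Theta .
\]
Multiplying by $\Theta^2$ shows that $\tilde K=\Sigma(\Theta,\Theta)$ identically, which is exactly eq.~\eqref{eq:5} written as an identity rather than as an equation. Hence $\tilde K$ is constant if and only if $\Sigma(\Theta,\Theta)$ is constant on $\cS$.

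Next, (ii)$\iff$(iii). The left-hand side of \eqref{eq:4} is precisely $\nabla_a\Sigma(\Theta,\Theta)$, so \eqref{eq:4} reads
\[
\nabla_a\Sigma(\Theta,\Theta)=\Theta^2\left(\nabla_aK-2\nabla^e\Theta_{ae}\right).
\]
Because $\Theta\neq0$ everywhere, the right-hand side vanishes if and only if \eqref{eq:6} holds. On the other hand, $\nabla_a\Sigma=0$ is equivalent to $\Sigma$ being constant, because $\cS$ is a connected sphere. This gives (ii)$\iff$(iii), and combined with the first step, all three statements are equivalent.

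The only point I would check with care is that \eqref{eq:4} really follows from \eqref{eq:3}. Its derivation used $\nabla^e(\Theta_{\{a}\Theta_{e\}})$ combining with $\nabla^e\nabla_{\{a}\Theta_{e\}}$ into $\nabla^e(\Theta^{-1}\nabla_{\{a}\nabla_{e\}}\Theta)$. This rests on the identity
\[
\Theta^{-1}\nabla_a\nabla_e\Theta=\nabla_a\Theta_e+\Theta_a\Theta_e,
\]
whose trace-free part gives the combination. I will take this as established in the preceding section. Beyond that check, the argument is direct bookkeeping and I do not expect a real obstacle.
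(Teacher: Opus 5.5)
Your proposal is correct and follows the paper's own argument: the paper reads (i)$\iff$(iii) off the identity $\tilde K=\Sigma(\Theta,\Theta)$ from \eqref{eq:2}/\eqref{eq:5}, and (ii)$\iff$(iii) off \eqref{eq:4}, exactly as you do. You also make explicit the two points the paper leaves implicit, namely that $\Theta\neq0$ lets you divide by $\Theta^2$ and that connectedness of $\cS$ turns $\nabla_a\Sigma=0$ into constancy; the identity $\Theta^{-1}\nabla_a\nabla_e\Theta=\nabla_a\Theta_e+\Theta_a\Theta_e$ you flag behind \eqref{eq:3} does hold.
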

This proposition immediately implies the
\begin{theorem}\label{theorem:1}
  Let $(\cS,g_{ab})$ be a 2-dimensional manifold with spherical topology. Then there exists a unique symmetric and trace-free tensor $\cR_{ab}$ such that
  \[
    \nabla^e(r_{ae} - 2 \cR_{ae}) = 0.
  \]
\end{theorem}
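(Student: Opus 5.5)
Existence follows from Prop.~\ref{prop:1} once we know a conformal factor to a round metric exists. By the uniformization theorem, any metric $g_{ab}$ on a surface of spherical topology is conformal to a round metric. So there is a nowhere-vanishing $\Theta$ with $\Theta^{-2}g_{ab}$ of constant Gauß curvature. By (i)$\Rightarrow$(ii) of Prop.~\ref{prop:1}, $\nabla_aK = 2\nabla^e\Theta_{ae}$. Since $r_{ae}=Kg_{ae}$ and $\nabla^e(Kg_{ae})=\nabla_aK$, the tensor $\cR_{ab}:=\Theta_{ab}=\Theta^{-1}\nabla_{\{a}\nabla_{b\}}\Theta$ is symmetric and trace-free and satisfies $\nabla^e(r_{ae}-2\cR_{ae})=0$.

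For uniqueness, suppose $\cR_{ab}$ and $\cR'_{ab}$ both satisfy the equation. Their difference $T_{ab}$ is symmetric, trace-free and divergence-free with respect to $g_{ab}$. On a 2-dimensional surface, symmetric trace-free divergence-free tensors are conformally invariant. Indeed, the trace-free condition is conformally invariant, and for $g=\Omega^2\tilde g$ one has $\tilde\nabla^b(\Omega^{w}T_{ab})\propto\nabla^bT_{ab}$ for a suitable weight $w$ (in dimension $n$ the weight making the divergence conformally covariant is $w=n-2=0$ for $T_{ab}$ with indices down). So I would verify directly that $\tilde\nabla^e\tilde T_{ae}=\Omega^{2}\nabla^eT_{ae}$ for $\tilde T_{ab}=T_{ab}$, which is a short computation with the difference of connections $C_{ab}{}^c=2\delta_{(a}{}^c\Upsilon_{b)}-g_{ab}\Upsilon^c$. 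The trace terms drop out precisely because $n=2$ and $T$ is trace-free.

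It therefore suffices to show that a symmetric trace-free divergence-free tensor on the round sphere vanishes; equivalently, that there are no holomorphic quadratic differentials on $S^2$. In a complex coordinate $\zeta$ on the round sphere, $T$ is encoded by $T_{\zeta\zeta}\,\dd\zeta^2$ and its conjugate. The divergence-free condition becomes $\del_{\bar\zeta}T_{\zeta\zeta}=0$, so $T_{\zeta\zeta}\dd\zeta^2$ is a holomorphic quadratic differential. On the Riemann sphere the space of such differentials has dimension $3\cdot0-3<0$, so it is $0$ by Riemann--Roch. An elementary check also works: holomorphic $T(\zeta)$ on $\mathbb C$ transforms under $w=1/\zeta$ to $T(1/w)w^{-4}$, which must be regular at $w=0$. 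This forces $T$ to be entire with a zero of order at least $4$ at infinity, hence $T\equiv0$.

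Alternatively, one can stay within the paper's calculus and argue by integration on the round sphere. Contracting $\del^e\del_eT_{ab}$ and using the Weitzenböck identity for divergence-free, trace-free $T$ gives $\int |\del T|^2 + c\int|T|^2=0$ with $c>0$ from $K=1$ and the sign conventions. I expect the main obstacle to be this uniqueness step: verifying conformal invariance carefully with the paper's negative-definite signature, and making sure the vanishing argument (Riemann--Roch or Weitzenböck) is stated cleanly. Existence is immediate from uniformization and Prop.~\ref{prop:1}.
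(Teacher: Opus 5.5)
Your proposal is correct and follows the paper's proof: existence comes from Proposition~\ref{prop:1} applied to a conformal factor to a round metric, and uniqueness holds because the difference of two such tensors is symmetric, trace-free and divergence-free, hence zero on a sphere. You go further than the paper by making explicit what it takes for granted: uniformization for the existence of $\Theta$, and an actual proof of the ``classical result'' it only cites, via conformal invariance and the absence of holomorphic quadratic differentials. One small slip: the count $3g-3$ is valid only for genus $g\ge 2$ (on $S^2$ the reason is that quadratic differentials have negative degree $-4$), so rely on your Liouville argument there.
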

\begin{proof}
  Existence follows from Proposition~\ref{prop:1} since $\nabla^er_{ae}=\nabla_aK=2\nabla^e\Theta_{ab}$ for a suitable $\Theta$. To show uniqueness, we assume that $\hat\cR_{ab}$ is another such tensor. Then the difference $d_{ab}=\cR_{ab}-\hat\cR_{ab}$ satisfies the equation $\nabla^ed_{ae}=0$. It is a classical result that there are no symmetric, trace-free tensors on $S^2$ which are also divergence-free, so that this equation only admits the solution $d_{ab}=0$.
\end{proof}
\begin{corr}\label{corr:1}
  Any conformal factor $\Theta$ according to Proposition~\ref{prop:1} satisfies the linear equation
  \begin{equation}
    \label{eq:8}
    \nabla_{\{a}\nabla_{b\}} \Theta = \cR_{ab} \Theta.
  \end{equation}
\end{corr}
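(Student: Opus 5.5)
The plan is to combine Proposition~\ref{prop:1} with the uniqueness part of Theorem~\ref{theorem:1}. Let $\Theta$ be a nowhere-vanishing conformal factor such that $\Theta^{-2}g_{ab}$ is round. By Proposition~\ref{prop:1}(ii), $\Theta$ satisfies $\nabla_aK = 2\nabla^e\Theta_{ae}$, where $\Theta_{ab}=\Theta^{-1}\nabla_{\{a}\nabla_{b\}}\Theta$. Since $r_{ab}=Kg_{ab}$, we have $\nabla^e r_{ae}=\nabla_aK$. It follows that
\[
  \nabla^e\left(r_{ae}-2\Theta_{ae}\right)=0 .
\]

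The tensor $\Theta_{ab}$ is symmetric and trace-free by construction, being $\Theta^{-1}$ times the trace-free part of the Hessian. It therefore belongs to the class of tensors described in Theorem~\ref{theorem:1}. By the uniqueness statement there, $\Theta_{ab}=\cR_{ab}$. Multiplying by $\Theta$ gives $\nabla_{\{a}\nabla_{b\}}\Theta=\cR_{ab}\Theta$, which is the claimed linear equation~\eqref{eq:8}.

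The one substantive ingredient is the uniqueness, which rests on the absence of nonzero symmetric trace-free divergence-free tensors on $S^2$. In two dimensions such tensors are exactly the holomorphic quadratic differentials with respect to the conformal structure, and on the sphere there are none. This holds for any metric, since the divergence condition for trace-free tensors is conformally invariant up to a factor in 2D. Theorem~\ref{theorem:1} already assumes this fact, so no additional obstacle arises. The only point to check is that $\Theta$ is assumed nowhere vanishing, so that $\Theta_{ab}$ is globally defined and smooth on $\cS$ and the uniqueness argument applies on the whole sphere.

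A consequence worth noting is that $\cR_{ab}$ does not depend on which round rescaling $\Theta$ is chosen. All such factors therefore solve the same linear equation, which is what makes the solution space a vector space in the sequel.
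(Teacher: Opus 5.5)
Your proof is correct and follows the paper's intended argument. Proposition~\ref{prop:1}(ii) makes $\Theta_{ab}=\Theta^{-1}\nabla_{\{a}\nabla_{b\}}\Theta$ a symmetric trace-free tensor with $\nabla^e(r_{ae}-2\Theta_{ae})=0$ (which is also how existence is obtained in Theorem~\ref{theorem:1}), so uniqueness forces $\Theta_{ab}=\cR_{ab}$, and multiplying by $\Theta$ gives~\eqref{eq:8}.
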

To our knowledge, the field $\cR_{ab}$ has made its first appearance in the analysis of the asymptotic structure of space-times by Geroch~\cite{Geroch:1977}. It was used heavily in~\cite{Frauendiener:2022}, where gauge-independent expressions for asymptotic quantities were derived using a conformal variant of the familiar GHP formalism~\cite{Geroch:1973b,Penrose:1984a}. For obvious reasons, it is often referred to as the Geroch tensor or the co-curvature. Since it provides an alternative characterization of the deviation of a spherical metric from roundness, i.e., of its inhomogeneity, one could also call it the \emph{distortion tensor}.

A further obvious property of the distortion tensor is that it complements the Ricci tensor, which is pure trace here, to a symmetric and divergence-free tensor on $\cS$. In some sense, one could regard this tensor as a 2-dimensional Einstein tensor, with $\nabla^a(r_{ab}- 2\cR_{ab})=0$ taking the place of a Bianchi identity.

Let us now consider the behaviour of $\cR_{ab}$ under conformal rescalings of $g_{ab}$. Suppose, $\Theta^{-2}g_{ab}=q_{ab} = \hat\Theta^{-2}\hat{g}_{ab}$, i.e., $\hat{g}_{ab} = (\hat{\Theta}/\Theta)^{-2}g_{ab}$ and define $\theta = \hat{\Theta}/\Theta$. Inserting $\hat{\Theta} = \Theta \theta$ into the equation
\[
\hnabla_{\{a}\hnabla_{b\}} \hat{\Theta} = \hat\cR_{ab} \hat{\Theta}.
\]
and using the relation between $\nabla_a$ and $\hnabla_a$, when acting on a 1-form $v_a$ 
\[
  \nabla_av_b = \hnabla_av_b + 2\theta^{-1}\nabla_{\{a}\theta v_{b\}},
\]
yields after some calculation
\[
  \begin{multlined}
    \hat\cR_{ab} \hat{\Theta}
    = \Theta \hnabla_{\{a}\hnabla_{b\}} \theta + \theta \nabla_{\{a}\nabla_{b\}} \Theta.
\end{multlined}
\]
This implies the
\begin{prop}\label{prop:2}
  
Let $(\Theta,g_{ab}) \mapsto (\hat\Theta,\hat{g}_{ab}) = (\Theta\theta,\theta^2 g_{ab})$ be a conformal rescaling such that $q_{ab}=\Theta^{-2}g_{ab}$ remains unchanged. Then the distortion field $\hat\cR_{ab}$ is given by
  \begin{equation}
  \label{eq:9}
  \hat\cR_{ab} = \cR_{ab} + \theta^{-1}\hnabla_{\{a}\hnabla_{b\}} \theta = \cR_{ab} - \theta\nabla_{\{a}\nabla_{b\}} \theta^{-1}.
\end{equation}
\end{prop}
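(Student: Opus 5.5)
The plan is to start from the relation displayed just before the statement,
\[
  \hat\cR_{ab}\hat\Theta = \Theta\,\hnabla_{\{a}\hnabla_{b\}}\theta + \theta\,\nabla_{\{a}\nabla_{b\}}\Theta ,
\]
which was obtained by inserting $\hat\Theta=\Theta\theta$ into the equation of Corollary~\ref{corr:1} for $\hat g_{ab}$. We then use Corollary~\ref{corr:1} for $g_{ab}$ itself, $\nabla_{\{a}\nabla_{b\}}\Theta=\cR_{ab}\Theta$. This is legitimate because $\Theta^{-2}g_{ab}=q_{ab}$ is round, so $\Theta$ is one of the conformal factors of Proposition~\ref{prop:1}. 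Substituting, the second term becomes $\theta\Theta\cR_{ab}=\hat\Theta\cR_{ab}$. Dividing by $\hat\Theta=\Theta\theta$, which is nowhere zero, gives the first form
\[
  \hat\cR_{ab}=\cR_{ab}+\theta^{-1}\hnabla_{\{a}\hnabla_{b\}}\theta .
\]
We could equally avoid the displayed relation altogether. Theorem~\ref{theorem:1} says $\hat\cR_{ab}$ is the unique tracefree tensor with $\hnabla^e(\hat r_{ae}-2\hat\cR_{ae})=0$, and Proposition~\ref{prop:1} says any $\hat\Theta$ rescaling $\hat g$ to a round metric satisfies $\hnabla_{\{a}\hnabla_{b\}}\hat\Theta=\hat\cR_{ab}\hat\Theta$. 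Since $\hat\Theta=\Theta\theta$ works, the whole content lies in that single computation.

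For the second equality we must show
\[
  \theta^{-1}\hnabla_{\{a}\hnabla_{b\}}\theta=-\theta\nabla_{\{a}\nabla_{b\}}\theta^{-1}.
\]
This is the standard conformal covariance of the tracefree Hessian of the conformal factor in two dimensions. We have $g_{ab}=\theta^{-2}\hat g_{ab}$, so going from $\hat g$ to $g$ is the rescaling by $\phi=\theta^{-1}$. Using the connection relation quoted above, $\nabla_av_b=\hnabla_av_b+2\theta^{-1}\nabla_{\{a}\theta\, v_{b\}}$, we apply it with $v_b=\nabla_b\theta^{-1}=-\theta^{-2}\nabla_b\theta$ and take the tracefree part. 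A short computation gives
\[
  \nabla_{\{a}\nabla_{b\}}\theta^{-1}
  = -\theta^{-2}\hnabla_{\{a}\hnabla_{b\}}\theta + 2\theta^{-3}\nabla_{\{a}\theta\nabla_{b\}}\theta - 2\theta^{-3}\nabla_{\{a}\theta\nabla_{b\}}\theta .
\]
The two quadratic terms cancel, and multiplying by $-\theta$ yields the claim.

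Alternatively, one can reverse the roles of the two metrics. Applying the first equality with $(\hat g,\hat\Theta)$ as the starting data and $\theta^{-1}$ as the rescaling gives $\cR_{ab}=\hat\cR_{ab}+\theta\nabla_{\{a}\nabla_{b\}}\theta^{-1}$, which is exactly the second form. I would present this symmetric argument as the main proof and keep the direct computation as a check.

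The only real obstacle is bookkeeping of signs and factors in the connection difference. This includes the convention for $\{ab\}$ as the symmetric tracefree part, the fact that trace terms drop out, and which metric the indices are raised with. Since only tracefree parts of tensors with lowered indices appear, the metric used for traces does not affect the result.
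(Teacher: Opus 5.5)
Your proposal is correct and follows the paper's route. Like the paper, you insert $\hat\Theta=\Theta\theta$ into Corollary~\ref{corr:1} for $\hat g_{ab}$ to get $\hat\cR_{ab}\hat\Theta=\Theta\hnabla_{\{a}\hnabla_{b\}}\theta+\theta\nabla_{\{a}\nabla_{b\}}\Theta$, then use $\nabla_{\{a}\nabla_{b\}}\Theta=\cR_{ab}\Theta$ and divide by $\hat\Theta$. The paper states the second form without derivation; both your direct computation (where the quadratic terms do cancel) and your role-reversal argument (applying the first form to $(\hat\Theta,\hat g_{ab})$ with rescaling $\theta^{-1}$) supply it correctly. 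One small citation fix: the equation $\hnabla_{\{a}\hnabla_{b\}}\hat\Theta=\hat\cR_{ab}\hat\Theta$ is Corollary~\ref{corr:1}, not Proposition~\ref{prop:1}.
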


Finally, we consider a conformal factor $\Theta$ as before, such that $g_{ab}=\Theta^2q_{ab}$ is a metric with constant Gauß curvature. Let $\Phi$ be any solution of~\eqref{eq:8} and define $\phi = \Phi\Theta^{-1}$. Since both $\Theta$ and $\Phi$ satisfy~\eqref{eq:8} we find
\[
  \begin{aligned}
    0&=\nabla_{\{a}\nabla_{b\}} \Phi - \cR_{ab} \Phi = \phi \nabla_{\{a}\nabla_{b\}} \Theta + \Theta \nabla_{\{a}\nabla_{b\}} \phi + 2 \nabla_{\{a}\Theta \nabla_{b\}} \phi - \cR_{ab} \Theta \phi \\
    &=\Theta \left(\nabla_{\{a}\nabla_{b\}} \phi + 2 \Theta^{-1}\nabla_{\{a}\Theta \nabla_{b\}} \phi\right)
\end{aligned}
\]
Using the relation between $\nabla_a$ and $\del_a$, the covariant derivative of $q_{ab}$, acting on a 1-form $v_a$
\[
  \nabla_av_b = \del_av_b - 2\Theta^{-1}\del_{\{a}\Theta v_{b\}}
\]
we have shown the
\begin{prop}\label{prop:3}
  Let $\Theta$ be a conformal factor rescaling $g_{ab}$ to a round metric, and let $\Phi$ be an arbitrary solution of \eqref{eq:8}, then their quotient $\phi=\Phi\Theta^{-1}$ satisfies the equation
\begin{equation}
  \label{eq:10}
  \del_{\{a}\del_{b\}} \phi = 0.
\end{equation}
Conversely, every solution of~\eqref{eq:10} generates a solution $\Phi=\Theta\phi$ of~\eqref{eq:8}.
\end{prop}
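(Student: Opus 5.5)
Both directions of the proposition rest on one algebraic identity. Put $\Phi=\Theta\phi$ into the left-hand side of~\eqref{eq:8}. Expanding the product with the Leibniz rule, and using that $\Theta$ itself satisfies~\eqref{eq:8} by Corollary~\ref{corr:1}, the terms $\phi\nabla_{\{a}\nabla_{b\}}\Theta$ and $\cR_{ab}\Theta\phi$ cancel. We are left with
\[
  \nabla_{\{a}\nabla_{b\}} \Phi - \cR_{ab} \Phi = \Theta\left(\nabla_{\{a}\nabla_{b\}} \phi + 2 \Theta^{-1}\nabla_{\{a}\Theta \nabla_{b\}} \phi\right),
\]
exactly as displayed before the statement. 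This holds for any smooth $\phi$; no assumption on $\Phi$ has been used yet.

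Next I rewrite the bracket in terms of the round connection $\del_a$ of $q_{ab}=\Theta^{-2}g_{ab}$. For the 1-form $v_b=\nabla_b\phi=\del_b\phi$ the quoted relation gives
\[
  \nabla_a v_b = \del_a v_b - 2\Theta^{-1}\del_{\{a}\Theta\, v_{b\}}.
\]
Taking the symmetric trace-free part removes the trace contributions. Since the two metrics are conformal, "trace-free" means the same thing for both. The correction term is already symmetric and trace-free, and on scalars $\nabla_a\Theta=\del_a\Theta$. Hence
\[
  \nabla_{\{a}\nabla_{b\}}\phi = \del_{\{a}\del_{b\}}\phi - 2\Theta^{-1}\del_{\{a}\Theta\,\del_{b\}}\phi,
\]
and the bracket above collapses to $\del_{\{a}\del_{b\}}\phi$. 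So
\[
  \nabla_{\{a}\nabla_{b\}} \Phi - \cR_{ab} \Phi = \Theta\,\del_{\{a}\del_{b\}}\phi .
\]

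Both directions now follow because $\Theta$ is nowhere zero. If $\Phi$ solves~\eqref{eq:8}, then $\phi=\Phi\Theta^{-1}$ satisfies~\eqref{eq:10}. Conversely, if $\phi$ solves~\eqref{eq:10}, then $\Phi=\Theta\phi$ solves~\eqref{eq:8}.

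The only delicate step is the sign and normalisation of the connection-difference term. The coefficient must be exactly $-2\Theta^{-1}$, so that it cancels the $+2\Theta^{-1}$ term. I would check this directly from the Christoffel difference for $g_{ab}=\Theta^2 q_{ab}$. That difference is
\[
  C_{ab}{}^c = \delta_a{}^c\Upsilon_b+\delta_b{}^c\Upsilon_a - q_{ab}\Upsilon^c, \qquad \Upsilon_a=\del_a\log\Theta .
\]
Acting on $v_c$, its trace-free symmetric part is $-2\Upsilon_{\{a}v_{b\}}$, which confirms the stated formula. One should also confirm that the statement intends $\phi=\Phi\Theta^{-1}$ with the background $q_{ab}$ being the round metric $\Theta^{-2}g_{ab}$. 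That matches the relation $g_{ab}=\Theta^2 q_{ab}$ used in the derivative formula, despite the slightly inconsistent phrasing just before the proposition.
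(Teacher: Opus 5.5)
Your proof is correct and follows essentially the same route as the paper: substitute $\Phi=\Theta\phi$, cancel terms using Corollary~\ref{corr:1}, and convert the remaining bracket into $\del_{\{a}\del_{b\}}\phi$ with the connection-difference formula. Because you state the identity for arbitrary $\phi$, the converse direction follows immediately, whereas the paper derives the computation assuming $\Phi$ solves \eqref{eq:8} and only asserts the converse.
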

In fact, if $\Phi$ happens to be another conformal factor rescaling to a round metric, then this result is also a consequence of the corollary~\ref{corr:1} applied in the case when $g_{ab}$ itself is already a round sphere metric being rescaled to another round sphere, since in that case the distortion field vanishes.

\section{The associated Minkowski space}
\label{sec:solving-phi-equation}

In this section, we focus on the solution space of~\eqref{eq:8}. In view of Proposition~\ref{prop:3}, this is equivalent to studying the solutions of~\eqref{eq:10}. Thus, we consider the situation described in the proposition and assume (without loss of generality) that the round metric~$q_{ab}=\Theta^{-2}g_{ab}$ has Gauß curvature equal to $1$.

We use the identity~\eqref{eq:1} which reduces to
\[
  \del_a\left(\Delta \phi - 2 \phi \right) = 2\del^b(\del_{\{a}\del_{b\}}\phi).
\]
Suppose, $\phi$ solves~\eqref{eq:10}. Then it follows that there is a constant $c$ such that
\[
  \Delta \phi - 2 \phi = 2c
\]
and, furthermore, that $\hat\phi = \phi-c$ is an eigenfunction of $\Delta$ with eigenvalue 2. Hence, every solution $\phi$ of~\eqref{eq:10} can be written in the form
\[
  \phi = \phi_2 + c
\]
where $\phi_2$ is any eigenfunction of $\Delta$ with eigenvalue 2, and $c$ is some constant.

Conversely, for any such function, the left-hand side of the identity vanishes, so that
\[
  \del^b(\del_{\{a}\del_{b\}}\phi) = 0.
\]
But since there are no divergence-free, symmetric and trace-free tensors on $S^2$, this equation implies~\eqref{eq:10}.

It is well known that the eigenspace of the Laplacian for the eigenvalue 2 is 3-dimensional. Thinking of $S^2$ as the unit-sphere in $\RR^3$, this eigenspace is spanned by the (restrictions of the) three Cartesian coordinate functions $(x,y,z)$. This means that the solution space of~\eqref{eq:10} is 4-dimensional, spanned by the functions
\[
  (X^A)_{A=0:3} := (1,x,y,z),
\]
the general solution of~\eqref{eq:10} is
\begin{equation}
  \phi = a_AX^A\label{eq:11}
\end{equation}
for some constants $a_A$, and the general solution of~\eqref{eq:8} can be written as $\Phi=\Theta\phi$.

Let us now consider the expression~\eqref{eq:7} for $\Phi$
\[
  \Sigma(\Phi,\Phi) = K \Phi^2 - \Phi\nabla^a\nabla_a\Phi + \nabla_a\Phi\nabla^a\Phi.
\]
Inserting $\Phi=\Theta \phi$ results in
\[
    \Sigma(\Phi,\Phi) 
    =  \phi^2\left(K\Theta^2  - \Theta\nabla^a\nabla_a\Theta + \nabla_a\Theta\nabla^a\Theta\right) - \Theta^2\phi\nabla^a\nabla_a\phi 
      + \Theta^2\nabla_a\phi\nabla^a\phi
\]
Rewriting this expression in terms of $\del_a$ and $q_{ab}$ and using the fact that the rescaled metric has unit Gauß curvature, we find that $\Sigma$ is conformally invariant 
\begin{equation}
  \Sigma(\Phi,\Phi) = \Sigma(\phi,\phi) = \phi^2 - \phi\del^a\del_a\phi + \del_a\phi\del^a\phi.\label{eq:12}
\end{equation}
With the explicit solutions for $\phi$, we can evaluate the expression explicitly. Keeping in mind, that $\Delta\phi = 2(\phi - a_0)$ we find
\[
  \Sigma(\phi,\phi) = \phi^2 - 2\phi(\phi - a_0) + \del_a\phi\del^a\phi.
\]
To compute the square of the gradient of $\phi$ we use $\del_a\phi = a_A\del_aX^A$ and the fact that the inverse metrics of $S^2$ and that of the embedding $\RR^3$ are related by 
\[
  q^{ab}\del_aX^A\del_bX^B = -\delta^{AB} + X^AX^B, \qquad \text{for } A,B=1,2,3.
\]
Thus, we find that
\[
  \begin{aligned}
    q^{ab} \del_a\phi\del_b\phi &= q^{ab}a_A\del_aX^Aa_B\del_bX^B = -\delta^{AB}\alpha_A\alpha_B + (X^Aa_A-a_0)^2 \\
    &= -(a_1^2 + a_2^2 + a_3^2) + (a_1x + a_2y + a_3z)^2 = -(a_1^2 + a_2^2 + a_3^2) + (\phi-a_0)^2.
\end{aligned}
\]
Finally, this results in the explicit expression 
\[
  \Sigma(\phi,\phi) = - \phi^2 + 2 a_0 \phi + \del_a\phi\del^a\phi  
  = a_0^2 - a_1^2 - a_2^2 - a_3^2.
\]
We summarise the previous discussion in the
\begin{theorem}\label{theorem:2}
  Let $(\cS,g_{ab})$ be a 2-dimensional spherical Riemannian manifold. Let $\Theta$ be a conformal factor to the unit-sphere.
  \begin{enumerate}[label=(\roman*)]
  \item The solution space of~\eqref{eq:8} is a four-dimensional real vector space~$\VV$. 
  \item Relative to $\Theta$, it is parametrized by the solutions $\phi$ of~\eqref{eq:10} given in~\eqref{eq:11}
  \item $\VV$ is naturally equipped with a Lorentzian metric~$\Sigma$, given by (the polarisation of) ~\eqref{eq:7},
  \item $\Phi$ is a conformal factor rescaling to a unit-metric iff $\Phi$ is a time-like unit-vector with respect to the inner product $\Sigma$.
\end{enumerate}
\end{theorem}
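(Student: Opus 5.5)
The plan is to assemble the four claims from Proposition~\ref{prop:3}, the explicit solution~\eqref{eq:11}, the invariance~\eqref{eq:12}, and Proposition~\ref{prop:1}.

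For (i) and (ii), fix the conformal factor $\Theta$ with $q_{ab}=\Theta^{-2}g_{ab}$ of unit curvature. By Proposition~\ref{prop:3}, the map $\phi\mapsto\Theta\phi$ is a linear bijection between solutions of~\eqref{eq:10} and solutions of~\eqref{eq:8}; its inverse is $\Phi\mapsto\Phi\Theta^{-1}$, and $\Theta$ is nowhere zero. The discussion above shows that the solutions of~\eqref{eq:10} are exactly $\phi=a_AX^A$. Two steps establish this. First, identity~\eqref{eq:1} on the unit sphere together with the absence of trace-free divergence-free symmetric tensors on $S^2$ shows that~\eqref{eq:10} is equivalent to $\Delta\phi-2\phi$ being constant. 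Second, the eigenvalue-$2$ eigenspace is spanned by $x,y,z$. Linear independence of $1,x,y,z$ is clear, so $\VV$ is four-dimensional and $(a_A)$ are coordinates on it relative to $\Theta$.

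For (iii), define $\Sigma$ on $\VV$ by polarising~\eqref{eq:7}:
\[
  \Sigma(\Phi,\Psi)=K\Phi\Psi-\tfrac12\left(\Phi\Delta\Psi+\Psi\Delta\Phi\right)+\nabla_a\Phi\nabla^a\Psi .
\]
This is a symmetric bilinear form on functions. The computation leading to~\eqref{eq:12} gives $\Sigma(\Phi,\Phi)=a_0^2-a_1^2-a_2^2-a_3^2$ pointwise, and polarisation then gives $\Sigma(\Phi,\Psi)=a_0b_0-a_1b_1-a_2b_2-a_3b_3$. The form is therefore a constant function on $\cS$, and its signature is Lorentzian. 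Naturality means independence of the choice of $\Theta$. Since $\Sigma$ is defined through $g_{ab}$ alone, the constant value is intrinsic to $\Phi\in\VV$, and $\Theta$ enters only through the coordinates $(a_A)$.

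For (iv), suppose $\Phi$ rescales $g_{ab}$ to a unit metric. Then $\Phi$ is nowhere zero, and by Corollary~\ref{corr:1} it lies in $\VV$. Equation~\eqref{eq:5} with $k=1$ gives $\Sigma(\Phi,\Phi)=1$. Conversely, suppose $\Phi\in\VV$ has $\Sigma(\Phi,\Phi)=1$. The main obstacle is to show that $\Phi$ is nowhere zero, since Proposition~\ref{prop:1} requires this. Write $\phi=a_0+\vec a\cdot X$ with $a_0^2-|\vec a|^2=1$. Then $|a_0|>|\vec a|$, so $|\phi|\ge|a_0|-|\vec a|>0$ on $S^2$, and hence $\Phi=\Theta\phi$ has no zeros.

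Now apply Proposition~\ref{prop:1}(iii). The expression~\eqref{eq:7} for $\Phi$ is the constant $1$, so $\Phi^{-2}g_{ab}$ has constant curvature. By~\eqref{eq:2} this curvature is $\Sigma(\Phi,\Phi)=1$, so $\Phi^{-2}g_{ab}$ is a unit metric. If one also wants future-pointing, i.e.\ $\Phi>0$, this corresponds to $a_0>0$, which selects one sheet of the hyperboloid.
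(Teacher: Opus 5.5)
Your proposal is correct and follows essentially the same route as the paper. Like the paper, you use Proposition~\ref{prop:3} with identity~\eqref{eq:1} and the eigenvalue-$2$ eigenspace to get $\VV$, the conformal invariance~\eqref{eq:12} with explicit evaluation for $\Sigma=a_0^2-a_1^2-a_2^2-a_3^2$, and~\eqref{eq:5} together with Proposition~\ref{prop:1} for (iv). Your bound $|\phi|\ge|a_0|-|\vec a|>0$ supplies the nonvanishing of $\Phi$ that Proposition~\ref{prop:1} needs, which the paper only asserts in a remark after the theorem.
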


We mention some interesting points.
\begin{itemize}[label=--]
\item Not all elements of $\VV$ can serve as conformal factors. Those which produce constant Gauß curvature metrics are time-like vectors with respect to $\Sigma$, which implies that they are either strictly positive or strictly negative.
\item The representation of the solutions in terms of $\phi$ is relative to a specific conformal factor $\Theta$. This implies that the same solution $\Phi$ of~\eqref{eq:8} is represented in different ways relative to different conformal factors\footnote{However, it is a consequence of the conformal invariance of this situation that the ``coordinates'' $a_A$ assigned to $\Phi$ are independent of the conformal factor~$\Theta$.See the next section.}.
\item This fact is reminiscent of an affine space, whose elements (points or events) can be represented by position vectors once an origin has been selected. If we take this analogy more seriously, then we can interpret the (future) light-cone in $\VV$ as an ``affine'' manifold to which at every point --- a conformal factor $\Theta$ --- is associated a vector space of ``position vectors'' --- solutions $\phi$ of~\eqref{eq:10}--- which can be used to reach every other point. Clearly, the restriction to time-like vectors in $\VV$ removes the possibility to form arbitrary linear combinations.
\item This raises the question, as to what light-like or space-like solutions $\Theta$ of \eqref{eq:8} correspond to? Obviously, light-like $\Theta$'s correspond to rescalings between the sphere and the plane. At the same time, space-like $\Theta$'s relate the sphere to hyperbolic space, thus covering the three possibilities of the uniformization theorem of complex analysis~\cite{Weyl:1955,Jost:2006}. We have not explored this in more detail.
\end{itemize}

\section{Transformations}
\label{sec:invariance-group-1}

In this section, we discuss the transformations which leave the solution space of~\eqref{eq:8} invariant. These come in two flavours: since the equation is constructed using conformal properties of the sphere, all transformations which leave those invariant will also operate on these solutions. However, as we have seen, there are other possibilities of transforming solutions into solutions. We will discuss these first.

The previous discussions have shown that the essential properties are obtained already by considering only the unit-sphere. We have seen, that all solutions of~\eqref{eq:8} can be reduced to solutions of~\eqref{eq:10} where $\del_a$ is the connection compatible with a unit-metric $q_{ab}$. These are not unique. Therefore, we focus on the set of all functions $\phi$ for which a unit-metric exists with respect to which $\phi$ satisfies~\eqref{eq:10}.

Let $\hat{q}_{ab} = \theta^2q_{ab}$ be another unit-metric. Then the conformal factor $\theta$ satisfies~\eqref{eq:8}, i.e.,
\begin{equation}
  \label{eq:19}
  \hat{\del}_{\{a}\hat{\del}_{b\}} \theta = 0,
\end{equation}
with $\cR_{ab}=0$, since the curvature is constant. Now, suppose $\phi$ satisfies~\eqref{eq:10} with respect to $\del_a$. Then a short calculation shows that $\hat{\phi}=\theta \phi$ satisfies
\[
  \hat{\del}_{\{a}\hat{\del}_{b\}} \hat\phi = 0.
\]
Rewriting~\eqref{eq:19} in terms of $\del_a$ yields
\[
 0 = \del_{\{a}\del_{b\}} \theta - 2 \theta^{-1}\del_{\{a}\theta \del_{b\}}\theta = \theta^2\del_{\{a} \del_{b\}} \theta^{-1}. 
\]
It follows, that $\theta^{-1}$ satisfies~\eqref{eq:10} and can therefore be written as a linear combination of the first four spherical harmonics with respect to the metric $q_{ab}$. Thus, we have the following
\begin{prop}
  Let $q_{ab}$ be a unit-metric on $S^2$ and let $X^A=(1,x,y,z)$ be an orthonormal basis of the solution space $\VV$ of~\eqref{eq:10}. Then for every function $\hat\phi$ on $S^2$ of the form
  \[
    \frac{\phi_AX^A}{\theta_AX^A} \qquad \text{with } \eta^{AB}\theta_A\theta_B = 1
  \]
  there exists a unit-metric $\hat{q}_{ab}$ such that $\hat{\del}_{\{a}\hat{\del}_{b\}} \hat\phi = 0$. In particular, 
  \begin{enumerate}[label=(\roman*)]
  \item the metrics are related by $\hat{q}_{ab} = (\theta_AX^A)^{-2}q_{ab}$ ,
  \item the map $i:\VV \to \hat\VV$, $X^A \mapsto \hat{X}^A = (\theta_EX^E)^{-1}X^A$ is an isometry between the solution spaces of~\eqref{eq:10} with respect to the metrics $q_{ab}$ resp.\ $\hat{q}_{ab}$.
  \end{enumerate}
\end{prop}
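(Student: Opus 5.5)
Set $\tau:=\theta_AX^A$. Since $\eta^{AB}\theta_A\theta_B=1$, Theorem~\ref{theorem:2}(iv), applied with $g_{ab}=q_{ab}$ and $\Theta=1$, says that $\tau$ is a time-like unit vector in $\VV$. Such vectors have no zeros, as noted after that theorem: writing $\tau=\theta_0+\vec\theta\cdot\vec X$ we have $|\vec\theta\cdot\vec X|\le|\vec\theta|<|\theta_0|$. The same theorem then shows that $\hat q_{ab}:=\tau^{-2}q_{ab}$ is a unit-metric, and statement (i) is essentially this definition. We therefore put $\theta:=\tau^{-1}$, so that $\hat q_{ab}=\theta^2q_{ab}$ is precisely the setting of the preceding discussion. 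It remains to show that $\theta$ satisfies \eqref{eq:19} and that $\theta\phi$ satisfies the hatted equation whenever $\del_{\{a}\del_{b\}}\phi=0$.

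Both facts rest on the transformation law for the trace-free Hessian. For a one-form, $\hat\del_av_b=\del_av_b+2\theta^{-1}\del_{\{a}\theta\,v_{b\}}$ (the formula displayed before Proposition~\ref{prop:3}, with the roles of the factors exchanged). I would expand
\[
\hat\del_{\{a}\hat\del_{b\}}(\theta f)=\theta\,\del_{\{a}\del_{b\}}f+f\Big(\del_{\{a}\del_{b\}}\theta+2\theta^{-1}\del_{\{a}\theta\,\del_{b\}}\theta\Big).
\]
The cross terms $2\del_{\{a}\theta\,\del_{b\}}f$ cancel between the Leibniz rule and the connection correction. Next I would check that the bracket equals $-\theta^2\del_{\{a}\del_{b\}}\theta^{-1}=-\theta^2\del_{\{a}\del_{b\}}\tau$, which is the identity already displayed before the proposition. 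Because $\tau\in\VV$, the bracket vanishes. Taking $f=1$ gives \eqref{eq:19}. Taking $f=\phi=\phi_AX^A$, which solves \eqref{eq:10}, gives $\hat\del_{\{a}\hat\del_{b\}}\hat\phi=0$ with $\hat\phi=\phi_AX^A/\tau$. The only real work is this sign and coefficient bookkeeping. It can be cross-checked against Proposition~\ref{prop:2} with $\cR_{ab}=\hat\cR_{ab}=0$ for the two round metrics.

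For (ii), the computation above shows that $i:f\mapsto\theta f$ maps solutions of \eqref{eq:10} for $q$ linearly into solutions for $\hat q$. Running the same computation with $\hat q$, $\theta^{-1}$ in place of $q$, $\theta$ shows that $i$ is invertible, so it is a linear isomorphism $\VV\to\hat\VV$. To prove that $i$ is an isometry I would use the conformal invariance \eqref{eq:12}. Regard $\hat q_{ab}$ as the ``$g_{ab}$'' and $\Theta=\theta^{-1}=\tau$ as the factor rescaling it to the unit-metric $q_{ab}$. A solution $\Phi$ of \eqref{eq:8} for $\hat q$, whose distortion tensor vanishes so that \eqref{eq:8} is the hatted version of \eqref{eq:10}, then corresponds to $\phi=\Phi\Theta^{-1}=\theta\Phi$. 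Equation \eqref{eq:12} gives $\hat\Sigma(\Phi,\Phi)=\Sigma(\phi,\phi)$, where $\hat\Sigma$ is \eqref{eq:7} computed with $\hat q$ and $K=1$. This is the required isometry, up to keeping track of the direction of the map: $\Phi=\theta^{-1}\phi$ rather than $\theta\phi$. I expect this to be the main obstacle. The resolution I would try is that $X^A\mapsto\theta X^A$ and its inverse $\hat X\mapsto\theta^{-1}\hat X$ are both instances of \eqref{eq:12} with the roles of $q$ and $\hat q$ swapped, and a map whose inverse is an isometry is itself an isometry.

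Finally, polarisation extends the equality of quadratic forms to the bilinear form $\Sigma$. This gives $\hat\Sigma(\hat X^A,\hat X^B)=\Sigma(X^A,X^B)=\eta^{AB}$, so $\hat X^A$ is again an orthonormal basis.
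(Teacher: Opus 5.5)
Your route is the paper's own. The text before the proposition transforms the trace-free Hessian under $\hat q_{ab}=\theta^2q_{ab}$; it argues from a given unit-metric to $\theta^{-1}\in\VV$ and leaves the converse to Theorem~\ref{theorem:2}(iv), which you make explicit. Your use of Proposition~\ref{prop:3} with \eqref{eq:12} for (ii) spells out what the paper only asserts.

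\textbf{The central computation is wrong as written.} For $\hat q_{ab}=\theta^2q_{ab}$ the correct law is $\hat\del_av_b=\del_av_b-2\theta^{-1}\del_{\{a}\theta\,v_{b\}}$. This is the formula before Proposition~\ref{prop:3} with $g_{ab}\to\hat q_{ab}$, $\Theta\to\theta$; no exchange of roles is needed. With your $+2$, the cross terms do not cancel; they add up to $4\del_{\{a}\theta\,\del_{b\}}f$. Your bracket $\del_{\{a}\del_{b\}}\theta+2\theta^{-1}\del_{\{a}\theta\,\del_{b\}}\theta$ equals $\tfrac13\theta^{-2}\del_{\{a}\del_{b\}}\theta^{3}$. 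That is not $-\theta^2\del_{\{a}\del_{b\}}\theta^{-1}$, and it does not vanish in general for $\theta^{-1}\in\VV$. With the correct sign,
\[
\hat\del_{\{a}\hat\del_{b\}}(\theta f)=\theta\,\del_{\{a}\del_{b\}}f+f\bigl(\del_{\{a}\del_{b\}}\theta-2\theta^{-1}\del_{\{a}\theta\,\del_{b\}}\theta\bigr)=\theta\,\del_{\{a}\del_{b\}}f-f\,\theta^{2}\,\del_{\{a}\del_{b\}}\tau .
\]
This is the combination in the paper's display and is what you go on to assert. From there, taking $f=1$ and $f=\phi$, your argument for the main claim goes through.

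\textbf{In (ii), the ``obstacle'' you anticipate comes from a wrong identification.} If $\hat q_{ab}$ plays the role of $g_{ab}$, the factor must satisfy $\Theta^{-2}\hat q_{ab}=q_{ab}$, i.e.\ $\Theta=\theta=\tau^{-1}$. Your choice $\Theta=\tau$ gives $\tau^{-2}\hat q_{ab}=\tau^{-4}q_{ab}\neq q_{ab}$. With $\Theta=\theta$ (and $\hat\cR_{ab}=0$), Proposition~\ref{prop:3} gives $\Phi=\Theta\phi=\theta\phi=i(\phi)\in\hat\VV$. That alone already yields the main claim, so your hand computation is only a cross-check. Then \eqref{eq:12} gives $\hat\Sigma(i\phi,i\phi)=\Sigma(\phi,\phi)$ directly, with no direction issue, and polarisation gives $\hat\Sigma(\hat X^A,\hat X^B)=\eta^{AB}$. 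Your fallback is also correct, since $\hat\phi\mapsto\tau\hat\phi$ is the instance of \eqref{eq:12} with $g_{ab}=q_{ab}$, $\Theta=\tau$ and unit-metric $\hat q_{ab}$. It is simply unnecessary.
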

We note, that this result reinforces the point of view that the solution spaces $\VV$ of \eqref{eq:10} with respect to any unit-metric are all isometric, the isometries being in 1-1 relationship with time-like future unit vectors. Relaxing this condition on $\theta$ enlarges the set of metrics from unit-metrics to general round metrics, i.e., metrics with arbitrary positive Gauß curvature. The solution spaces take the role of the spaces of position vectors attached to points in Minkowski space-time and the role of the translations between these points is played by the isometries $i$ between solution spaces.

A conformal Killing vector (field) is a vector field $Z^a$ on the sphere which satisfies the equation
\begin{equation}
  \label{eq:13}
  \nabla_{\{a} Z_{b\}} = 0.
\end{equation}
Since this equation is conformally invariant, we may consider it on the unit-sphere with respect to a unit-metric $q_{ab}$. We write it in the form
\begin{equation}
  \del_a Z_b = q_{ab} \beta + \eps_{ab} \omega,\label{eq:14}
\end{equation}
where $\beta$ and $\omega$ are some scalar functions on $S^2$. Taking another derivative and anti-symmetrizing results in the equation
\[
  -2\del_{[c}\del_{a]}Z_b = r_{cab}{}^eZ_e = -2 q_{b[a}\del_{c]}\beta + 2 \eps_{b[a}\del_{c]}\omega.
\]
Contracting and using $K=1$ yields
\[
  Z_a = \del_a\beta - \eps_a{}^e\del_e\omega,
\]
thus relating the Helmholtz decomposition of $Z_a$ to the degrees of freedom of the conformal Killing vector. Inserting back into~\eqref{eq:14} yields
\begin{equation}
  \del_a \del_b\beta - \eps_b{}^e\del_a\del_e\omega = q_{ab} \beta + \eps_{ab} \omega.\label{eq:15}
\end{equation}
The skew part and the trace of this equation give conditions on the two functions, namely
\[
  \Delta \beta = 2 \beta ,\qquad \Delta \omega = 2 \omega,
\]
i.e., they are both eigenfunctions of the unit-sphere Laplacian with eigenvalue 2. Since all these functions $f$ also satisfy the equation $\del_{\{a}\del_{b\}} f =0$, this implies that the trace-free part of~\eqref{eq:15} is identically satisfied. Now we can write the conformal Killing vector as
\[
  Z_a = \beta_A\del_aX^A - \eps_a{}^e\omega_A\del_eX^A.
\]
Here $\beta_A$ and $\omega_A$ are six arbitrary real constants\footnote{Note, that the constant functions, proportional to $X^0=1$, do not play a role here. Therefore, $A$ only takes values $A=1,2,3$.}. It is straightforward to represent $Z^a$ as a vector field $Z = Z^E\frac{\del}{\del X^E}$ in $\RR^3$ by computing its action on $X^E$. This results in
\begin{equation}
  Z^E =  -(\beta^E - \beta X^E) + \omega^AX^B \eps_{AB}{}^{E}.\label{eq:20}
\end{equation}
This shows that the boost part (given by the $\beta_A$) is a constant vector field modified so that it operates tangent to the sphere, while the rotation part (given by the $\omega_A$) is proportional to $X\del_Y - Y\del_X$ and similar, as expected. In this way, we obtain six vector fields generating a Lie algebra which is isomorphic to the Lorentzian Lie algebra.

A conformal Killing vector leaves the conformal class of a metric invariant, but not necessarily the metric itself, since
\[
  \Lie_Z q_{ab} = 2 \del_{(a}Z_{b)} = 2 \beta q_{ab}.
\]
This means that $Z^a$ also changes the connection and, hence, it also changes the equation~\eqref{eq:10} and its solution space $\VV$. The change in the connection due to $Z^a$ is obtained from the commutator acting on an arbitrary 1-form $v_b$
\[
  \left[\Lie_Z, \del_a \right] v_b = -2 \del_{\{a}\beta\, v_{b\}}.
\]
The Lie derivative of a solution $\phi=\phi_0+\phi_AX^A$ in the direction of $Z^a$ is easily found using the representation~\eqref{eq:20}
\[
  \begin{aligned}
    \Lie_Z\phi &= Z^E\frac{\del \phi}{\del X^E} =  -(\beta^E - \beta X^E)\phi_E + \omega^AX^B \eps_{AB}{}^{E}\phi_E\\
    &= \beta \phi  - \left\{\beta_E \phi^E \cdot 1 + \left(\phi_0\beta_B -  \eps_{EAB}\omega^A\phi^E \right)X^B\right\}.
\end{aligned}
\]
The infinitesimal change of $\phi$ due to a conformal Killing vector clearly comes in two parts. The first part is due to the change of the metric and it changes $\phi$ into a solution in an infinitesimally close solution space corresponding to an infinitesimally close metric. The second part is the action of an infinitesimal boost-rotation on (the components) of a 4-vector and transforms the solutions in $\VV$ among themselves.

As a final thought, we return to the group of Möbius transformations which leave the complex structure of the Riemann sphere invariant. The Möbius group is isomorphic to the identity component of the Lorentz group. It is a well-known fact that a Möbius transformation is fixed when the images of only three points of the sphere are designated. Thinking of these three points as the tips of three unit-vectors tells us that there exists a unique Möbius transformation which makes these unit-vectors point into the direction of a designated orthonormal basis. The corresponding Lorentz transformation is a combination of a rotation and a boost, the latter of which changing the metric by an appropriate rescaling.

In summary then, given a generic spherical 2-surface $(\cS,g_{ab})$ we can always find a conformal factor $\Theta$ which rescales $g_{ab}$ to a unit-metric. The function $\Theta$ is not fixed uniquely by the requirement that the Gauß curvature of the rescaled metric be equal to unity. The remaining freedom in the structure of the resulting unit-sphere corresponds exactly to the choice of a Minkowski tetrad.

\section{Canonical embedding into Minkowski space}
\label{sec:canonical-embeddings}

Finally, in this section, we want to mention an immediate consequence of the conformal properties of the sphere as formulated here. There are many situations in GR where foliations by spheres play an important role. Probably the most obvious case is the geometry of null-infinity, which can be regarded as a shear-free null hypersurface foliated by spherical ``cuts''. Some of the conformal properties discussed in the present paper have already been used in a somewhat different form in~\cite{Frauendiener:2022}. More general null hypersurfaces with spherical cross-sections should also be mentioned.

Foliations by spheres have been instrumental in the proof of the Riemannian version of the Penrose inequality~\cite{Penrose:1973}, which makes heavy use of a particular foliation by spheres suggested by Geroch~\cite{Geroch:1973} and discussed in more detail by Jang and Wald~\cite{Jang:1977}, and whose existence was proven by Huisken and Ilmanen~\cite{Huisken:2001} (see~\cite{Frauendiener:2001b} for a space-time version of this approach). Spherical foliations have also been used in numerical studies in GR: we only mention the most recent proposal for solving the constraints on space-like hypersurfaces using a hyperbolic system developed by \cite{Racz:2015} and the efforts to study global properties of space-times by solving the conformal field equations~\cite{Beyer:2017,Camden:2025,Frauendiener:2021a,Frauendiener:2025}.

However, we want to focus on the idea of a quasi-local energy-momentum. Here, one is interested in assigning a value of energy-momentum to a bounded region in space-time at a given instant of time. Usually, one thinks of the region as being contained inside a spherical boundary. The fundamental issue with this idea is this: to be physically meaningful, the energy-momentum should behave like a covariant Lorentzian 4-vector. However, this raises the immediate question: what is the corresponding vector space?

The extensive review article by Szabados~\cite{Szabados:2009} describes this problem and lists many proposed solutions. Those fall broadly into two categories: either one constructs finite-dimensional linear spaces by solving appropriate PDEs, or one seeks to connect to a reference structure that provides an appropriate Minkowski space. The prime example for the first approach is the original quasi-local mass construction by Penrose~\cite{Penrose:1982}, which used solutions of the twistor equation to construct energy-momentum and angular momentum (see also the ensuing variations by Tod and others~\cite{Tod:1983a,Tod:1986,Dougan:1991}). The first quasi-local mass definition using the second approach is due to Bartnik~\cite{Bartnik:1989}, who uses the asymptotic region of an asymptotically flat and stationary extension of the surface under consideration as the reference.

We found in the previous sections that every spherical space-like 2-surface comes equipped with its own Minkowski space, obtained as the space of solutions to a PDE intrinsically constructed from the sphere's conformal structure. These solutions can be used straightforwardly to construct an embedding of the surface in Minkowski space. The construction is as follows.

Let $\MM$ be Minkowski space-time with an origin $O$ selected. Let $(t^a,x^a,y^a,z^a)$ be an orthonormal frame and consider the future light-cone $\cC_O$ of $O$, i.e., the set of all events $P$ with future-pointing null position vectors $l^a= T t^a+ X x^a + Y y^a + Z z^a$ with Cartesian coordinates $(T,X,Y,Z)$ satisfying $T^2-X^2-Y^2-Z^2=0$. The intersection of $\cC_O$ with the space-like hyperplane $T=t_al^a=1$ yields a unit-sphere $S^+$ consisting of points with coordinates $(1,x,y,z)$.

Let $(\cS,g_{ab})$ be a generic sphere as discussed above and choose a conformal factor $\Theta$ rescaling $g_{ab}$ to a unit-sphere metric, and let $\Theta_{ab}$ be its associated distortion tensor. Choosing $\Theta$ as a unit time-like vector in $\VV$, we complement it to an orthonormal basis of solutions of~\eqref{eq:8}. These can be written as $\Theta \cdot(1,x,y,z) $, where $x$, $y$, and $z$ are appropriate eigenfunctions of the unit-sphere Laplacian as discussed in Sect.~\ref{sec:solving-phi-equation}. Now consider the map
\[
  i: \cS \to \MM, \qquad p \mapsto \Theta(p) \cdot(1,x(p),y(p),z(p)).
\]
This is an isometric embedding of $\cS$ into $\MM$. Its image can be described as being obtained by shifting every point on $S^+$, which has a position vector $l^a=(1,x,y,z)$, to a point with position vector $\Theta l^a$.

This construction depends on several choices: the conformal factor $\Theta$, the tetrads in both $\VV$ and $\MM$, as well as the choice of the unit-sphere $S^+\subset\MM$. It is easy to see that a change of tetrad in $\MM$, which is realized by the application of an orthochronous Lorentz transformation, corresponds to the application of a similar Lorentz transformation in $\VV$.

The effect of choosing a different $\Theta$ is related to the well-known geometrical picture of describing $S^+$ as the intersection of the future and past light cones of two different time-like separated vertices. Then the translation of one vertex with respect to the other results in the change of the induced metric on the intersection by a conformal factor of the form $\phi=a_AX^A$ as discussed in Sect.~\ref{sec:solving-phi-equation}.

The result of this discussion is not only that any spherical 2-surface can be embedded in Minkowski space, but also that it can be regarded as embedded in its intrinsically defined Minkowski space. It also shows that there is a canonical Minkowski space associated with every space-like 2-sphere in a general relativistic space-time. This fact is useful for discussions of quasi-local quantities, since this structure provides a natural background against which one can view and compare the many proposed constructions. Note that there are no dynamical space-time properties involved in the definition of the intrinsic Minkowski space, so that it is a purely kinematical property of the 2-surface.


\section{Summary and outlook}
\label{sec:conclusion}

The conformal geometry of the 2-sphere is well known. We have discussed it here from a somewhat unusual perspective, focusing on the set of conformal factors rescaling a given metric to a round metric. We found that all these conformal factors satisfy a linear PDE on the surface, expressed in terms of the co-curvature, a trace-free symmetric tensor that measures the sphere's deviation from roundness.

The set of such conformal factors does not by itself have a linear structure, but the solution space of the PDE does. It turned out that this space is a 4-dimensional real vector space which carries a natural Lorentzian inner product. This metric is defined by the transformation of the Gauß curvature under conformal rescalings. The conformal factors that rescale to a round metric foliate the (future) null cone into space-like hyperboloids according to the value of the constant Gauß curvature. The solution space can be parameterized with respect to a fixed conformal factor by the lowest four spherical harmonics, i.e., constants and eigenfunctions of the unit-sphere Laplacian with eigenvalue 2. 

Even though it is rather trivial, this property allowed us to interpret the structure as akin to an affine space, in which each point possesses its own local vector space of position vectors connecting it to any other point in the space.

The final observation we made was that each spherical 2-surface is equipped with its own canonically defined Minkowski space. In view of the many cases where spherical 2-surfaces occur in GR, in particular in foliations, it is natural to ask how the individual Minkowski spaces align across several spheres. This might have implications for the structure of null hypersurfaces, in particular of null-infinity. We have started the investigation in the latter case and will report on this in a forthcoming paper.

\printbibliography
\end{document}